\theoremstyle{definition}
\newtheorem{thm}{Theorem}
\newtheorem{hyp}[thm]{Hypothesis}
\newtheorem{cor}[thm]{Corollary}
\numberwithin{equation}{section}
\numberwithin{thm}{section}
\newcommand{\wone}{w_1}
\newcommand{\wtwo}{w_2}
\newcommand{\w}{W}
\renewcommand{\a}{a}
\renewcommand{\b}{b}
\newcommand{\m}{m}
\newcommand{\mtwo}{1}
\newcommand{\blank}{\,\cdot\,}
\newcommand{\what}{\tilde{\wone}}
\newcommand{\whattwo}{\tilde{\wtwo}}
\newcommand{\whatgen}{\tilde{w}}
\renewcommand{\d}{\,\mathrm{d}}
\newcommand{\D}[1]{\frac{\d #1}{\d x}}
\newcommand{\hypref}[1]{{(H{\ref{#1}})}}
\title{Periodic Traveling Waves in an Integro-Difference Equation With Non-Monotonic Growth and Strong Allee Effect}
\date{}
\author{Michael Nestor and Bingtuan Li\thanks{M. Nestor's email is \href{mailto:mdnest01@gmail.com}{mdnest01@gmail.com}. B. Li was partially supported by the National Science Foundation under Grant DMS-1515875 and Grant DMS-1951482.}}
\affil{Department of Mathematics, University of Louisville\linebreak Louisville, KY 40292}
\begin{document}

\maketitle

\begin{abstract}
We derive sufficient conditions for the existence of a periodic traveling wave solution to an integro-difference equation with a piecewise constant growth function exhibiting a stable period-2 cycle and strong Allee effect.
The mean traveling wave speed is shown to be the asymptotic spreading speed of solutions with compactly supported initial data under appropriate conditions.
We then conduct case studies for the Laplace kernel and uniform kernel.
\end{abstract}

{\bf Key words:} Integro-difference equation, period two cycle, Allee effect, periodic traveling wave.
\newline

{\bf AMS Subject Classification:} 92D40, 92D25.

\section{Introduction}

Integro-difference equations in the form 
\begin{align}\label{q}
u_{n+1}(x)\;=\;Q[u_n](x)\;:=\,\int^{\infty}_{-\infty}k(x-y)\,g\left(u_n(y)\right) \d y
\end{align}
are of great interest in the studies of invasions of populations with discrete generations and separate growth and dispersal stages.
They have been used to predict changes in gene frequency \cite{lui82a, lui82b, lui83, slatkin, w78}, and applied to ecological problems~\cite{hh, ks, kot89, kot92, kotbook, lut, nkl,otto}.
Previous rigorous studies on integro-difference equations have assumed that the growth function is nondecreasing~\cite{w78, wein82}, or is non-monotone without Allee effect~\cite{lui83, wang}.
The results show existence of constant spreading speeds and travelng waves with fixed shapes and speeds.
Sullivan et al. ~\cite{pnas} demonstrated numerically that an integro-difference equation with a non-monotone growth function exhibiting a strong Allee effect can generate traveling waves with fluctuating speeds, and Otto~\cite{otto} showed that such an equation can have robust non-spreading solutions.
In this paper we give conditions for the existence of a periodic traveling wave with two intermediate speeds for an integro-difference equation with a piecewise constant growth function that exhibits a strong Allee effect and a period-2 cycle. 

Piecewise constant growth functions have been used in the studies of integro-difference equations; see for example~\cite{kot1, lut, otto, pnas}.
Such an equation is analytically tractable and it can provide specific insights into the dynamics of solutions.
For the piecewise constant growth function
\begin{equation} \label{g0}
g(u) = \begin{cases}
0, & \text{if } u < \a, \\
1, & \text{if } u \geq \a,
\end{cases}
\end{equation}
with $0<a<1$, which exhibits a strong Allee effect and monotonicity, Kot et al.~\cite{kot1} and Lutscher~\cite{lut} (Section 6.1) investigated the range expansion of the population and the spreading speed, and Sullivan et al.
~\cite{pnas} studied oscillations in spreading speeds when the dispersal kernel is density dependent.
More discussions about integro-difference equations with piecewise growth functions can be found in Lutscher~\cite{lut} (Chapter 15). 

In this paper, we consider the integro-difference equation (\ref{q}) with 
\begin{equation} \label{g}
g(u) = \begin{cases}
0, & \text{if } u < \a, \\
\mtwo, & \text{if } \a \leq u \leq \b, \\
\m, & \text{if } u > \b,
\end{cases}
\end{equation}
with $0<\a<\m<\b<1$.
$g(u)$ is a piecewise constant non-monotone growth function exhibiting a strong Allee effect~\cite{all}.
Specifically, it has a stable fixed point at zero and a stable period two cycle $(1,\m)$ with $\a$ the Allee threshold value.
This is the function considered in Otto~\cite{otto} where non-spreading solitions are studied.
It may be viewed as an extension of (\ref{g0}).
A graphical demonstration of $g$ defined by (\ref{g}) is given by Figure 1.
\begin{figure}[h!]
\centering
 \caption{The growth of $g(u)$ defined by (\ref{g}) with $0<\a<\m<\b<1$.}
 \includegraphics[width=.4\linewidth]{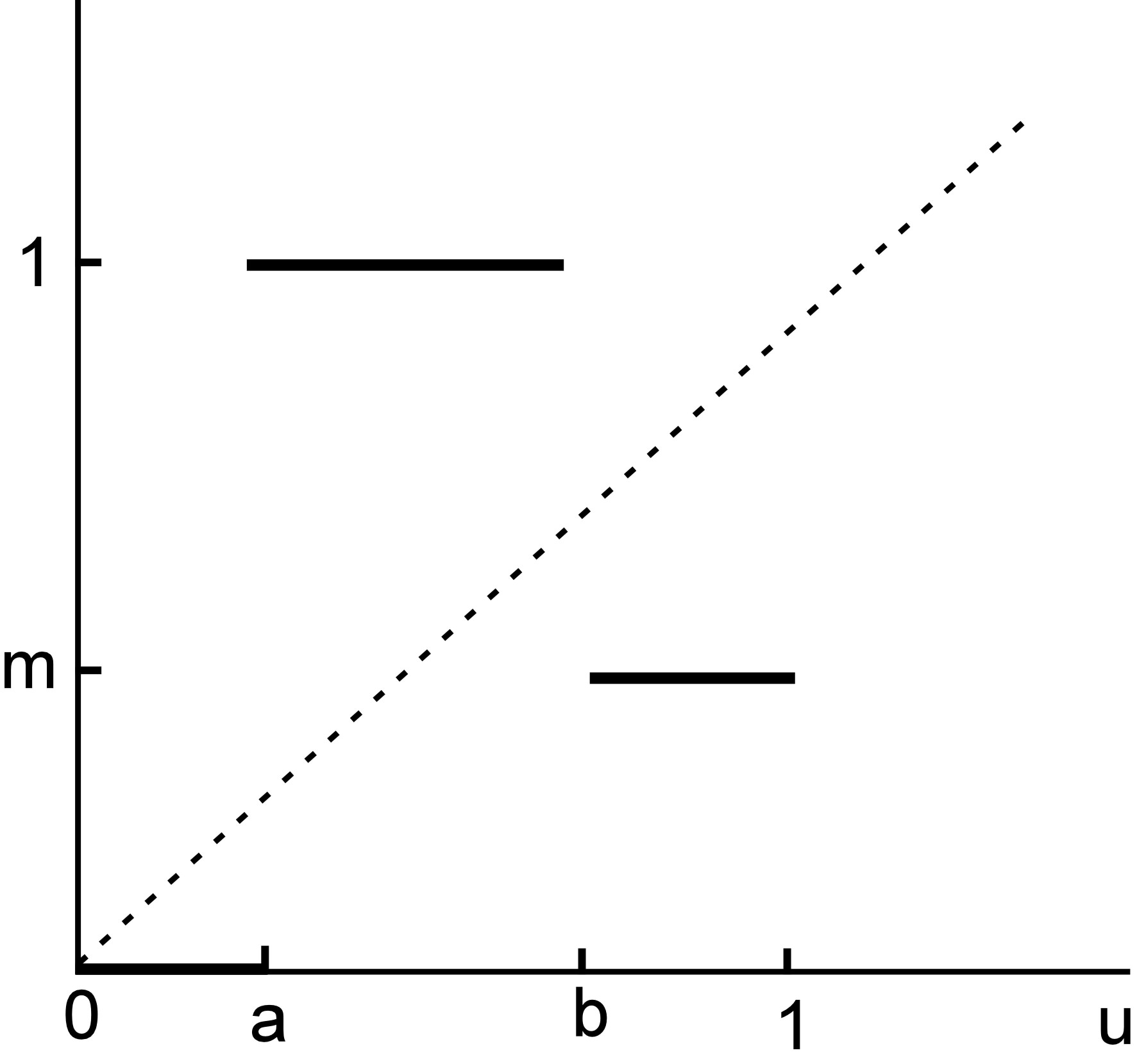}
\end{figure}
We rigorously construct periodic traveling waves with two intermediate speed for \eqref{q} with $g$ given by \eqref{g} and a proper dispersal kernel $k$.
To the best of our knowledge, this is the first time that traveling waves with oscillating speeds have been analytically established for scalar spatiotemporal equations with constant parameters.
We also show that the mean traveling wave speed is the spreading speed of solutions with compactly supported initial data under appropriate conditions.
We finally conduct case studies for the Laplace kernel and uniform kernel.

\section{Results}

In this section, we present our main results about our model.
Throughout, we assume all population density functions are bounded elements of $C(\mathbb R)$, the space of real-valued continuous functions.
It is straightforward to verify this space is closed under iteration of $Q$.
We denote the norm of a function $u \in C(\mathbb R)$ by $||u||_\infty$.

We make the following assumptions about the dispersal kernel.
\begin{hyp}
$k(x)$ is a piecewise differentiable function satisfying
\begin{enumerate}[(H1)]
\item $k(x) \geq 0$ for all $x$, and $\int_{-\infty}^{\infty} k(x) \d x = 1$, \label{h1}

\item $k(x)=k(-x)$ for all $x$,\label{h2}

\item $k(x)>0$ for all $x \in (-\sigma,\sigma)$ for some $0 < \sigma \leq \infty$,\label{h3}

\item \label{h4}
for all $\lambda \in (0,1)$, for all $A,B\in\mathbb R$ with $A>B$, the expression
$$
k(x-A) - \lambda k(x-B)
$$
has $<2$ sign changes, and

\item \label{h5}
for all $\lambda \in (0,1)$, for all $A,B\in\mathbb R$ with $A>B$, the expression
$$
k(x-r-A) - \lambda k(x-r-B) - k(x+r+A) + \lambda k(x+r+B)
$$
has $<4$ sign changes for sufficiently large $r>0$.
\end{enumerate}
\end{hyp}

\subsection{Periodic traveling waves}

Let $\wone,\wtwo \in C(\mathbb R)$ be bounded, continuous functions defined by
\begin{equation} \label{w1}
\wone(x) :=\int_x^\infty k(y) \d y
\end{equation}
and
\begin{equation} \label{w2}
\wtwo(x) := Q[\wone](x+c^*) = \int_{-\infty}^{\infty} k(x+c^*-y) g(\wone(y)) \d y
\end{equation}
where $c^*$ is a constant defined by
\begin{equation} \label{c}
c^* := \frac{1}{2} \sup \{ x\in\mathbb R \mid Q[\wone](x) = \a \}.
\end{equation}

The following theorem shows that $\wone$ and $\wtwo$ generate a traveling wave-like solution to \eqref{q}. We term this solution a {\it periodic traveling wave} since it alternates between two distinct wave profiles (with different limits at $-\infty$) while traveling with mean speed $c^*$.

\begin{thm} \label{theorem1} 
Let $(\w_n)_{n=0}^{\infty}$ be a sequence of bounded, continuous functions defined by
\begin{equation}\label{ptw}
\w_n(x) := \begin{cases}
\wone(x-nc^*), & n \text{ even}, \\
\wtwo(x-nc^*), & n \text{ odd}.
\end{cases}
\end{equation}
If $||w_2||_\infty < \b$, then $\w_{n+1}=Q[\w_n]$ for all $n\geq0$.
\end{thm}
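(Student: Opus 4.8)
The plan is to reduce the whole recursion to two identities. By the shift $y\mapsto y+nc^*$ in \eqref{q}, for every $x$ we have $Q[\w_n](x)=Q[\wone](x-nc^*)$ when $n$ is even and $Q[\w_n](x)=Q[\wtwo](x-nc^*)$ when $n$ is odd. For even $n$, the definition \eqref{w2} says $\wtwo(t)=Q[\wone](t+c^*)$, i.e.\ $Q[\wone](s)=\wtwo(s-c^*)$, so $Q[\w_n](x)=\wtwo(x-(n+1)c^*)=\w_{n+1}(x)$ with no further work. Hence the only substantive point is the odd-step identity
\[
Q[\wtwo](x)=\wone(x-c^*)\qquad\text{for all }x,
\]
which gives $Q[\w_n](x)=\wone(x-(n+1)c^*)=\w_{n+1}(x)$ for odd $n$ and completes the proof.

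First I would record the shape of $\wone$. By \hypref{h1} it is continuous and non-increasing, with $\wone(-\infty)=1>\b$ and $\wone(+\infty)=0<\a$, so there are reals $p_1<p_2$ with $\{\wone>\b\}=(-\infty,p_1)$, $\{\a\le\wone\le\b\}=[p_1,p_2]$, and $\{\wone<\a\}=(p_2,\infty)$; thus $g(\wone(y))=\m\,\mathbf{1}_{(-\infty,p_1)}(y)+\mathbf{1}_{[p_1,p_2]}(y)$. Substituting into \eqref{q} and using $\wone(t)=\int_t^\infty k$ yields the closed form
\[
Q[\wone](x)=\wone(x-p_2)-(1-\m)\,\wone(x-p_1),
\]
which is continuous, tends to $\m$ at $-\infty$ and to $0$ at $+\infty$, and is non-negative (since $p_1<p_2$ and $1-\m<1$). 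Consequently $\{x:Q[\wone](x)=\a\}$ is nonempty (intermediate value theorem), closed, and bounded above (as $Q[\wone]\to0<\a$), so $c^*$ in \eqref{c} is well defined and finite, the supremum is attained, and $Q[\wone](2c^*)=\a$.

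The heart of the argument is the claim $\{x:Q[\wone](x)\ge\a\}=(-\infty,2c^*]$. For $x>2c^*$ this is easy: $Q[\wone]$ is continuous, never equals $\a$ there by the choice of $c^*$, and tends to $0<\a$, so it stays below $\a$. For $x\le2c^*$ I would differentiate the closed form to get $Q[\wone]'(x)=-\big(k(x-p_2)-(1-\m)k(x-p_1)\big)$ and apply \hypref{h4} with $\lambda=1-\m\in(0,1)$ and $A=p_2>p_1=B$: the bracket has fewer than two sign changes, so $Q[\wone]$ is either non-increasing or increases then decreases (a decrease-then-increase profile is impossible, since the increasing tail together with the limit $0$ would force $Q[\wone]$ to take a negative value, contradicting non-negativity unless the tail is flat at $0$, which collapses to the non-increasing case). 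If $Q[\wone]$ is non-increasing, then $Q[\wone](x)\ge Q[\wone](2c^*)=\a$ on $(-\infty,2c^*]$; if $Q[\wone]$ increases then decreases, then $Q[\wone]\ge\m>\a$ on the increasing part and on the decreasing part $Q[\wone]$ passes through $\a$ exactly at $2c^*$, so again $Q[\wone]\ge\a$ on $(-\infty,2c^*]$. I expect this shape control via \hypref{h4} to be the main obstacle; everything else is bookkeeping.

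With the claim in hand, the hypothesis $\|\wtwo\|_\infty<\b$ delivers the odd-step identity. For every $y$, $\wtwo(y)=Q[\wone](y+c^*)<\b$, so $g(\wtwo(y))\in\{0,1\}$ and $g(\wtwo(y))=1$ precisely when $\wtwo(y)\ge\a$, i.e.\ when $y+c^*\le2c^*$, i.e.\ $y\le c^*$. Hence $g(\wtwo(y))=\mathbf{1}_{(-\infty,c^*]}(y)$ for almost every $y$, and therefore
\[
Q[\wtwo](x)=\int_{-\infty}^{c^*}k(x-y)\,\d y=\int_{x-c^*}^{\infty}k(u)\,\d u=\wone(x-c^*),
\]
which is exactly the identity needed; combined with the even-step case, $\w_{n+1}=Q[\w_n]$ for all $n\ge0$.
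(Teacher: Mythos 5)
Your proof is correct and follows essentially the same route as the paper: reduce to the two identities $Q[\wone]=\wtwo(\cdot-c^*)$ and $Q[\wtwo]=\wone(\cdot-c^*)$, derive the closed form $Q[\wone](x)=\wone(x-\Phi_1(\a))-(1-\m)\wone(x-\Phi_1(\b))$, invoke \hypref{h4} to limit $Q[\wone]$ to at most one turning point, and conclude via the case analysis and the convolution of the resulting indicator. Your treatment is if anything slightly more careful than the paper's (you justify attainment of the supremum defining $c^*$ and explicitly rule out the decrease-then-increase profile), but the substance is identical.
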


\begin{proof}
For $i = 1,2$, define the right inverse of $w_i$ by
\begin{equation}
\Phi_i(\ell) := \sup \{ x \in \mathbb R \mid w_i(x) = \ell \}, \quad i=1,2.
\end{equation}
Note that $\Phi_i(\ell)$ is finite for $0<\ell<||w_i||_\infty$, where $||w_1||_\infty=1$, and $\a<||w_2||_\infty<\b$ by the assumption of this theorem.

By the translation invariance property of $Q$, it suffices to show
\begin{equation} \label{qw1}
Q[w_1](x) = w_2(x-c^*)
\end{equation}
and
\begin{equation} \label{qw2}
Q[w_2](x) = w_1(x-c^*)
\end{equation}
for all $x \in \mathbb R$.
The first equation follows immediately from definition \eqref{w2}.
To prove the second, observe that $w_1$ is monotonically decreasing and satisfies $w_1(\infty)=0$ and $w_1(-\infty)=1$ (see for example Figure \ref{fig:fig2}).
Thus,
$$
g(w_1(x)) = \begin{cases}
\m, & x < \Phi_1(\b), \\
1, & \Phi_1(\b) \leq x \leq \Phi_1(\a), \\
0, & x > \Phi_1(\a).
\end{cases}
$$
Applying the integro-difference operator yields
$$ \begin{aligned}
w_2(x-c^*) &= Q[w_1](x) \\
&= \m \int_{-\infty}^{\Phi_1(\b)} k(x-y) \d y + \int_{\Phi_1(\b)}^{\Phi_1(\a)} k(x-y) \d y \\
&= w_1(x-\Phi_1(\a)) - (1-\m)\, w_1(x-\Phi_1(\b)).
\end{aligned}$$
We can then differentiate with respect to $x$:
$$
\D{w_2} \Big| _{x-c^*}= - k(x-\Phi_1(\a)) + (\mtwo-\m) \, k(x-\Phi_1(\b)).
$$
From \hypref{h4}, $\D{w_2}$ has at most 1 sign change; thus, $w_2$ has at most 1 turning point.
We also have $w_2(x)<b$ for all $x$, hence $g(w_2(x)) \in \{0, 1 \}$.

We claim
\begin{equation} \label{gw2}
g(w_2(x)) = \begin{cases}
1, & x \leq c^*, \\
0, & x > c^*,
\end{cases}
\end{equation}
where $c^*$ is defined by \eqref{c}.
This can be shown by arguing in cases depending on the number of turning points of $w_2$.
\begin{enumerate}[{Case} 1.]
\item If $w_2$ has no turning points, then $w_2$ is monotone decreasing with $0 \leq w_2(x) \leq \m$ for all $x$.
It follows by the intermediate value theorem (IVT) that $c^* = \Phi_2(\a)$ satisfies \eqref{gw2}.

\item If $w_2$ has one turning point, say $x_0 \in \mathbb R$, then $w_2$ must be increasing for $x<x_0$ and decreasing for $x>x_0$.
It follows that $x_0$ is a global maximum, and $w_2$ is increasing on $(-\infty,x_0)$ and decreasing on $(x_0,\infty)$.
We have $g(w_2(x))=1$ for $x<x_0$.
By the IVT, $c^* = \Phi_2(\a)$ is the unique solution to \eqref{gw2}.
\end{enumerate}
Taking the convolution of equation \eqref{gw2} with $k(x)$ yields
$$
Q[w_2](x) \,=\, \int_{-\infty}^{c^*} k(x-y) \d y \,=\, w_1(x-c^*).
$$
This completes the proof.
\end{proof}

\begin{figure}[h]
\begin{subfigure}{0.5\textwidth}
\includegraphics[width=\linewidth]{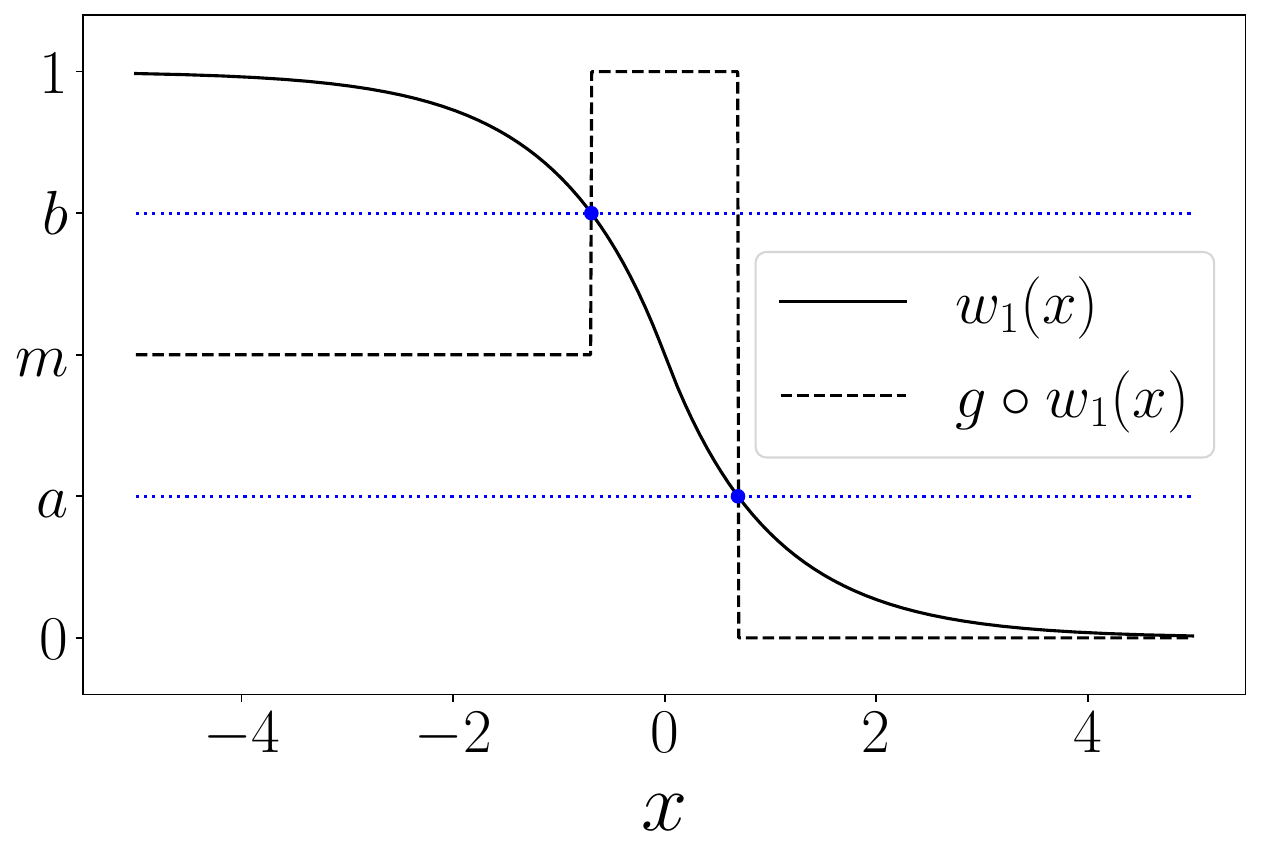}
\caption{$\wone(x)$}
\end{subfigure}
\begin{subfigure}{0.5\textwidth}
\includegraphics[width=\linewidth]{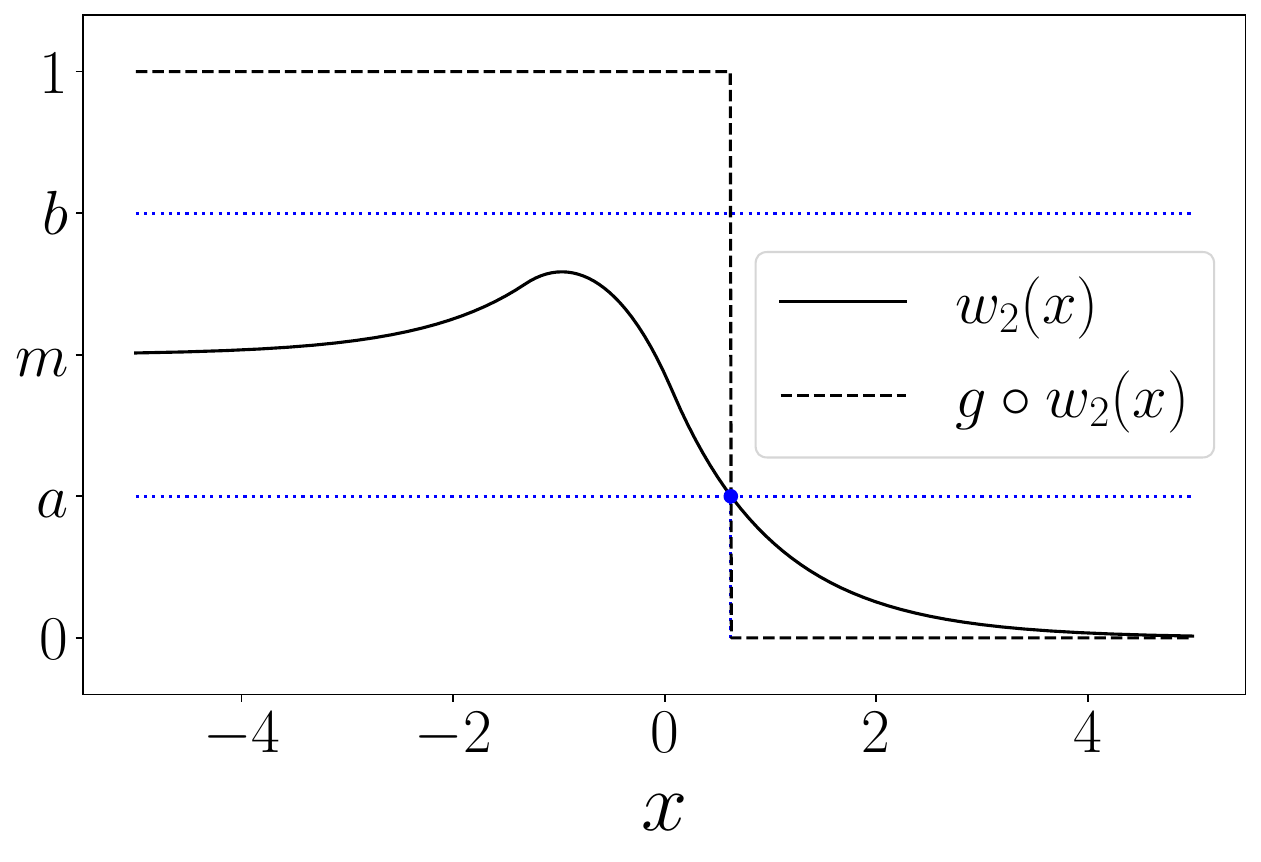}
\caption{$\wtwo(x)$}
\end{subfigure}
\caption{Profiles of the intermediate traveling waves $w_1$ and $w_2$, with $k(x)$ equal to the Laplace kernel (see Section 3).
Blue points mark the intersection of the curves with the level sets $\ell=\a$ and $\ell=\b$.}
\label{fig:fig2}
\end{figure}

Theorem \ref{theorem1} describes a rightward periodic traveling wave, but we can define the corresponding leftward periodic traveling wave via $x\mapsto\w_n(-x)$.
This is due to the fact that $Q$ is symmetric and translation invariant.

We investigated the intermediate wave speed generated by the periodic traveling wave solution \eqref{ptw}.
To do this, we define a sequence $(\xi_{n})_{n=0}^{\infty}$ of continuous functions $(0,\m)\to\mathbb R$ by

\begin{equation}
 \xi_{n}(\ell) := \sup \{ x \in \mathbb R : \w_n(x) = \ell \}, \quad \ell \in (0,\m).
\end{equation}
This expression yields the rightmost intersection of the curve $u=\w_n(x)$ with the horizontal line $u=\ell$.
By computing the first difference, we obtain a new function sequence $(c_n)_{n=0}^{\infty}$ (with the same domain and range) defined by
\begin{equation}
c_n(\ell) := \xi_{n+1}(\ell) - \xi_n(\ell), \quad \ell \in (0,\m).
\end{equation}
From Theorem \ref{theorem1}, we have that $c_n(\ell)$ converges to a period-$2$ cycle for every $\ell \in (0,\m)$.

Our results about the intermediate wave speed sequence are summarized in the following corollary.

\begin{cor}\label{cor1}
Assume the hypothesis of Theorem \ref{theorem1} holds. Then for all $\ell \in (0, \m)$, the sequence $c_n(\ell)$ satisfies
\begin{equation}
c_n(\ell) = c^* + \begin{cases}
\Phi_1(\ell) - \Phi_2(\ell), & n \text{ even}, \\
\Phi_2(\ell) - \Phi_1(\ell), & n \text{ odd}.
\end{cases}
\end{equation}
\end{cor}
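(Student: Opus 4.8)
The plan is to unwind the definitions of $\xi_n$ and $c_n$ directly, using the explicit identities for $\w_n$ established in Theorem \ref{theorem1}. First I would recall that $\w_n(x) = \wone(x - nc^*)$ for $n$ even and $\w_n(x) = \wtwo(x - nc^*)$ for $n$ odd. Applying the definition of $\xi_n$, a simple change of variables gives $\xi_n(\ell) = \Phi_1(\ell) + nc^*$ when $n$ is even and $\xi_n(\ell) = \Phi_2(\ell) + nc^*$ when $n$ is odd, where $\Phi_i$ is the right inverse of $w_i$ introduced in the proof of Theorem \ref{theorem1}. Here I should note that these right inverses are finite and well-defined for $\ell \in (0,\m)$, since $\m < 1 = ||w_1||_\infty$ and $\m < ||w_2||_\infty$ by the hypothesis of Theorem \ref{theorem1}, so $\ell$ lies strictly below the sup norm of each profile.

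Next I would substitute into the definition $c_n(\ell) = \xi_{n+1}(\ell) - \xi_n(\ell)$ and split into the two parity cases. For $n$ even, $\xi_{n+1}$ uses $\Phi_2$ and $\xi_n$ uses $\Phi_1$, so
\begin{equation}
c_n(\ell) = \big(\Phi_2(\ell) + (n+1)c^*\big) - \big(\Phi_1(\ell) + nc^*\big) = c^* + \Phi_2(\ell) - \Phi_1(\ell),
\end{equation}
and for $n$ odd the roles of $\Phi_1$ and $\Phi_2$ swap, giving $c_n(\ell) = c^* + \Phi_1(\ell) - \Phi_2(\ell)$. Comparing with the statement of the corollary, I notice the parity labels appear reversed relative to what this naive computation yields, so one point I would double-check carefully is the indexing convention: whether "$n$ even" in the corollary refers to the speed measured stepping from generation $n$ to $n+1$ with $n$ even, versus the profile at the later generation. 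If the discrepancy is genuine rather than a bookkeeping artifact, I would adjust the case labels to match; if it is a typo in the statement, I would flag it. Assuming the intended reading is consistent, the argument is otherwise a one-line substitution.

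The only substantive content beyond algebra is confirming that $\xi_n$ is related to $\Phi_i$ by a pure translation, which follows from the translation structure $\w_n(\blank) = w_i(\blank - nc^*)$ together with the observation that $\sup\{x : w_i(x - nc^*) = \ell\} = nc^* + \sup\{x : w_i(x) = \ell\}$. There is no real obstacle here — the hard work was done in Theorem \ref{theorem1}, which guarantees that the $\w_n$ genuinely have the two-profile translated form. The main thing to be careful about is the parity/indexing convention and making sure the claim that $c_n(\ell)$ converges to (indeed, immediately equals, up to parity) a period-2 cycle is consistent with the signs written above; the mean of the two values is manifestly $c^*$, as the traveling-wave interpretation demands.
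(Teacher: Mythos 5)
Your proposal is correct and follows essentially the same route as the paper: substitute the translated two-profile form of $\w_n$ from Theorem \ref{theorem1} into the definition of $\xi_n$ to get $\xi_n(\ell)=nc^*+\Phi_1(\ell)$ ($n$ even) or $nc^*+\Phi_2(\ell)$ ($n$ odd), then take the first difference. The parity discrepancy you flagged is real and is an inconsistency internal to the paper, not a flaw in your argument: although $c_n$ is defined as $\xi_{n+1}(\ell)-\xi_n(\ell)$, the paper's own proof silently computes $\xi_n(\ell)-\xi_{n-1}(\ell)$, which is what produces the case labels as stated in the corollary; under the definition as written, your version (with $\Phi_2-\Phi_1$ for $n$ even) is the one that is literally correct, so you were right to flag this as an indexing/typo issue rather than adjust your computation to match.
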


\begin{proof}
By Theorem \ref{theorem1}, we have
$$ \begin{aligned}
\xi_n(\ell) &= \begin{cases}
\sup\{x \in \mathbb R \mid w_1(x-nc^*) = \ell \}, & n \text{ even},\\
\sup\{x \in \mathbb R \mid w_2(x-nc^*) = \ell \}, & n \text{ odd},
\end{cases} \\
&= nc^* + \begin{cases}
\Phi_1(\ell), & n \text{ even}, \\
\Phi_2(\ell), & n \text{ odd}.
\end{cases}
\end{aligned}
$$
for every $\ell \in (0,\m)$. Taking the first difference yields
$$ \begin{aligned}
c_n(\ell) &= \xi_n(\ell) - \xi_{n-1}(\ell) \\
&= nc^* + \begin{cases}
\Phi_1(\ell), & n \text{ even}, \\
\Phi_2(\ell), & n \text{ odd},
\end{cases} - (n-1)\, c^* + \begin{cases}
\Phi_2(\ell), & n \text{ even}, \\
\Phi_1(\ell), & n \text{ odd},
\end{cases} \\
&= c^* + \begin{cases}
\Phi_1(\ell) - \Phi_2(\ell), & n \text{ even}, \\
\Phi_2(\ell) - \Phi_1(\ell), & n \text{ odd}.
\end{cases}
\end{aligned} $$
\end{proof}

Thus, for each fixed $\ell \in (0,\m)$, we can define intermediate traveling wave speeds $c_1^*$ and $c_2^*$ by
\begin{equation} \label{c1star}
c_1^*(\ell) := c^* + \Phi_1(\ell) - \Phi_2(\ell)
\end{equation}
and
\begin{equation} \label{c2star}
c_2^*(\ell) := c^* + \Phi_2(\ell) - \Phi_1(\ell).
\end{equation}

It follows that $\eqref{q}$ has a periodic traveling wave with wave profiles $w_1(x)$ and $w_2(x)$, intermediate wave speeds $c_1^*(\ell)$ and $c_2^*(\ell)$, and mean wave speed $\frac{1}{2}\left(c_1^*(\ell) + c_2^*(\ell)\right)=c^*$.
Note that the intermediate wave speeds depend on $\ell$, but their sum is identically equal to $2c^*$.

\subsection{Periodic spreading solutions}

By applying the results of Theorem \ref{theorem1}, we were able to prove the asymptotic spreading speed of solutions with compact initial data.
In particular, we showed that $c^*$ is the asymptotic spreading speed of compactly supported initial data with sufficient weight above the Allee threshold but below the overcompensation threshold.

\begin{thm} \label{theorem2}
Assume the hypothesis of Theorem \ref{theorem1} holds. Suppose $u \in C(\mathbb R)$ is bounded, non-negative, and has compact support. If
\begin{enumerate}[i.]
\item $||u|| _\infty < \b$,
\item the set $A := \{ x \in \mathbb R : u(x) \geq \a \} $ is connected and has sufficiently large length, and
\item $c^* > 0$,
\end{enumerate}
then the sequence $(u_n)_{n=0}^{\infty}$ defined by $u_0=u$ and $u_{n+1}=Q[u_n]$ for $n\geq0$ spreads asymptotically with mean speed $c^*$.
\end{thm}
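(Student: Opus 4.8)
The plan is to sandwich the iterates $u_n$ between suitable translates of the periodic traveling wave profiles $w_1, w_2$ constructed in Theorem \ref{theorem1}, and then read off the spreading speed from the translates. Concretely, I would first establish a \emph{lower bound}: because $A$ is a connected interval of large length with $u \geq \a$ on it, a single application of $Q$ produces a function $u_1 = Q[u_0]$ that dominates some translate $w_1(\,\cdot\,-\tau_1)$ on a half-line (and symmetrically a leftward translate), using \hypref{h1}--\hypref{h3} to guarantee that the convolution of the indicator of a long interval exceeds $w_1$ shifted appropriately; here the hypothesis that $|A|$ is ``sufficiently large'' is exactly what makes room for such a dominating translate. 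Since $Q$ is order preserving on the relevant range (this needs care — see below), iterating gives $u_{2n}(x) \geq w_1(x - 2n c^* - \tau)$ and $u_{2n+1}(x) \geq w_2(x - (2n+1)c^* - \tau')$ for all $n$, with fixed constants $\tau, \tau'$; combined with the analogous leftward bound this forces $\liminf$ spreading speed $\geq c^*$ in the appropriate periodic-in-$n$ sense.

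For the \emph{upper bound} I would bound $u$ above by a translate of $w_1$: since $\|u\|_\infty < \b$ and $u$ has compact support, one can choose $\tau^+$ so that $u(x) \leq w_1(x - \tau^+)$ everywhere (the right tail of a compactly supported function lies below the slowly decaying tail of $w_1$, and $w_1 \to 1 = \|w_1\|_\infty$ at $-\infty$). The subtlety is that $Q$ is \emph{not} globally monotone because $g$ is non-monotone; however, as long as all functions involved stay in the range $[0,\b]$ where $g$ is nondecreasing (values $0$ or $1$), comparison is valid. One must therefore check that both $u_n$ and the comparison waves remain $\leq \b$ under iteration: for the $w_i$ this is the standing hypothesis $\|w_2\|_\infty < \b$, and for the $u_n$ it follows inductively since $u_n \leq w_1(\,\cdot\,-\tau^+)$ already lies in $[0,\b]$ after one step (indeed $Q$ maps $[0,\b]$-valued functions to functions bounded by $\max(\m, \text{mass-weighted average}) \le$ something below $b$, using $\|w_2\|_\infty<\b$ as the governing estimate). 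Propagating $u_n(x) \leq w_1(x - n c^* - \tau^+)$ for even $n$ (and $\le w_2$-translate for odd $n$) then gives the matching upper bound on the spreading speed.

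The last step is to convert these two-sided wave bounds into the precise statement that $(u_n)$ ``spreads asymptotically with mean speed $c^*$'': for any $\ell \in (0,\m)$ the rightmost level-$\ell$ crossing point of $u_n$ is trapped between the level-$\ell$ crossing points of the lower and upper comparison waves, both of which advance by exactly $2c^*$ over each period of two steps (with a bounded oscillation governed by $\Phi_1(\ell) - \Phi_2(\ell)$, exactly as in Corollary \ref{cor1}); hence the average displacement per generation tends to $c^*$. One also needs $c^* > 0$ (hypothesis iii) so that the wave genuinely propagates outward rather than the population collapsing — without it the lower comparison wave does not escape to $+\infty$ and the Allee effect could drive extinction.

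I expect the main obstacle to be the \emph{lower bound / order-preservation} step: since $g$ is non-monotone, one cannot simply invoke a comparison principle, and one must argue by hand that starting from a function that is $\geq \a$ on a long interval and lies below $\b$, the first iterate dominates a translate of $w_1$ while staying in $[0,\b]$ — and that this domination is preserved under further iteration. This is where the ``sufficiently large length'' of $A$ must be quantified (in terms of how far $w_1$ must be shifted so that $Q$ applied to a long indicator exceeds it), and where \hypref{h4} is likely needed to control sign changes and hence the shape of the iterates, just as in the proof of Theorem \ref{theorem1}. The upper bound, by contrast, should be comparatively routine once the range restriction to $[0,\b]$ is in place.
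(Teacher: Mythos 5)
Your plan --- sandwich $u_n$ between translates of $w_1,w_2$ and invoke order preservation of $Q$ on functions with values in $[0,b]$, where $g$ is nondecreasing --- has a gap that is fatal rather than technical. The premise that the iterates stay in $[0,b]$ is false. Since $\|u_0\|_\infty<b$ and $A=[-r_0,r_0]$, the composition $g(u_0)$ is \emph{exactly} the indicator of $[-r_0,r_0]$, so $u_1(x)=\int_{-r_0}^{r_0}k(x-y)\,\mathrm{d}y$, whose maximum $u_1(0)=\int_{-r_0}^{r_0}k(y)\,\mathrm{d}y$ tends to $1>b$ as $r_0\to\infty$; because the theorem requires $r_0$ large, $u_1$ necessarily overshoots $b$ on a central interval, and $g(u_1)$ takes the value $m$ there. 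The overcompensation regime is therefore entered by the orbit itself, not only by the comparison waves, and the monotone-restriction trick is unavailable on either side: for the upper bound, $u\le w_1(\,\cdot\,-\tau^+)$ does not imply $g(u)\le g(w_1(\,\cdot\,-\tau^+))$ at points where the translate of $w_1$ exceeds $b$ (there $g$ drops to $m$) while $u$ lies in $[a,b]$ (there $g=1$); the lower bound fails symmetrically. So the difficulty you flag as the ``main obstacle'' is not confined to the lower bound, and the step you call routine also breaks.

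The paper avoids comparison entirely and computes the orbit exactly, which is what the piecewise-constant $g$ buys. It introduces the finite-width profiles $\what(x,r)=w_1(x)-w_1(x+2r)$ and $\whattwo(\blank,r)=Q[\what(\blank,r)](\,\cdot\,+c^*)$, notes $u_1=\what(\,\cdot\,-r_0,r_0)$ exactly, and then uses \hypref{h5} (not \hypref{h4}) to bound the turning points of $\whattwo(\blank,r)$ by three; symmetry about $x=-r-c^*$ plus largeness of $r$ then pins down the shape of $g(\whattwo)$ as a single indicator, giving the exact two-step renewal identity $Q^2[\what(\blank,r)](x+c^*+\phi_2(r,a))=\what\bigl(x,\,r+c^*+\phi_2(r,a)\bigr)$. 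The spreading speed falls out of the scalar recurrence $r_{n+1}=r_n+c^*+\phi_2(r_n,a)$ together with $\phi_2(r,a)\to\Phi_2(a)=c^*$, and $c^*>0$ is used to show $r_n\to\infty$. If you want to salvage a comparison-style argument you would still need this exact control of the overshoot region (the $0,1,m,1,0$ level structure of $g(\what)$ and the (H5) sign-change count), at which point the sandwich adds nothing; I recommend reworking the proof along the exact-renewal lines.
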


\begin{proof}
We define two continuous functions $\what,\whattwo : \mathbb R \times [0,\infty)\to\mathbb R$ by
\begin{equation}
\what(x,r) := w_1(x) - w_1(x+2r) = \int_{-2r}^{0} k(x-y) \d y
\end{equation}
and
\begin{equation}
\whattwo(x,r) := Q[\what(\blank,r)](x+c^*).
\end{equation}
Observe that $\what$ is symmetric with respect to $x=-r$, and $\whattwo$ is symmetric with respect to $x=-r-c^*$.
This notation can be justified by observing that $\what(x,r) \to w_1(x)$ and $\whattwo(x,r)\to w_2(x)$ as $r\to\infty$ for all $x$.
We also set
\begin{equation}
\phi_i(r,\ell) := \sup \{ x \in \mathbb R \mid \whatgen_i(x,r) = \ell \}, \quad i=1,2.
\end{equation}
$\phi_1$ and $\phi_2$ are the right-inverse of $\what$ and $\whattwo$, respectively.
For a fixed $r>0$, the value of $\phi_i(r,\ell)$ is finite if $0 < \ell < || \whatgen_i(\blank, r) || _\infty$, with the latter bound converging to $||w_i||_\infty$ as $r\to\infty$.
They satisfy
$$
\lim_{r\to\infty} \phi_i (r,\ell) = \Phi_i(\ell) , \quad 0 < \ell < || w_i ||_\infty, \quad i=1,2.
$$

We now apply the growth function to $\what$, assuming $r$ is sufficiently large enough so that $\a < ||\what(\blank,r)||_\infty < \b$; this is guaranteed by taking $r > \Phi_1\left(\frac{1-a}{2}\right)$.
We have
$$
g\left(\what(x,r)\right) = \begin{cases}
0, & x < -2r - \phi_1(r,a), \\
1, & -2r - \phi_1(r,a) < x < -2r - \phi_1(r,b), \\
\m, & -2r - \phi_1(r,b) < x < \phi_1(r,b),\\
1, & \phi_1(r,b) < x < \phi_1(r,a), \\
0, & \phi_1(r,a) < x.
\end{cases} $$
Note that the above expression converges to $g(w_1(x))$ as $r\to\infty$ for all $x$.
We then apply the convolution operator:
$$\begin{aligned} 
\whattwo(x-c^*,r) = Q[\what(\blank,r)](x) &= \int_{-2r-\phi_1(r,a)}^{\phi_1(r,a)} k(x-y) \d y - (1-\m)\, \int_{-2r-\phi_1(r,b)}^{\phi_1(r,b)} k(x-y) \d y \\
&= w_1(x-\phi_1(r,a)) - (1-\m)\, w_1(x-\phi_1(r,b)) \\
&+ (1-\m)\, w_1(x+2r+\phi_1(r,b)) - w_1(x+2r+\phi_1(r,a)).
\end{aligned}$$
Computing the derivative of this expression yields
$$\begin{aligned} \label{qhatcomputation2}
\D{\whattwo}\Big|_{x-c^*} &= -k(x-\phi_1(r,a)) + (1-\m)\, k(x-\phi_1(r,b)) \\
& - (1-\m)\, k(x+2r+\phi_1(r,b)) + k(x+2r+\phi_1(r,a)).
\end{aligned}$$

By \hypref{h5}, the number of turning points of $\whattwo$ is at most 3 (assuming $r$ is sufficiently large).
Furthermore, $\whattwo$ is symmetric around $x=-r-c^*$; thus the sign changes must come in pairs or occur exactly at $x=-r-c^*$.
We can immediately exclude the cases of 0 and 2 turning points because $\whattwo$ is non-negative, vanishes at $\pm\infty$, and is not identically zero.
As in the proof of Theorem \ref{theorem1}, this leaves two cases:
\begin{enumerate}[{Case} 1.]

\item If $\whattwo$ has 1 turning point, then it must occur at $x=-r-c^*$.
We have $\whattwo(-r-c^*,r) = w_2(-r-c^*) - w_2(r+c^*) \to \m$ as $r \to \infty$; hence, it is sufficient to take $r$ large enough so that $|\whattwo(-r-c^*,r) - \m| < \min\{|\a-\m|,|\b-\m|\}$.
It follows by the triangle inequality that $\whattwo(x,r) \leq \whattwo(-r-c^*,r) < \b$ for all $x$.

\item If $\whattwo$ has 3 turning points, then again by symmetry, they must be given by $\{-r-c^*-t, -r-c^*, -r-c^*+t\}$, for some $t>0$.
We can label them in increasing order by $t_1<t_2<t_3$.
It follows that $t_1$ and $t_3$ are maxima with $\whattwo(t_1,r)=\whattwo(t_3,r) \to ||w_2||_\infty$ as $r\to\infty$, and $t_2$ is a local minima with $\whattwo(t_2,r) \to \m$ as $r\to\infty$.
Thus $\whattwo$ is monotone increasing on $(-\infty,t_1)\cup(t_2,t_3)$ and monotone decreasing on $(t_1,t_2)\cup(t_3,\infty)$.
By taking $r$ sufficiently large, we can make $\left| \whattwo(t_1,r)-||w_2||_\infty \right| = \left| \whattwo(t_3,r)-||w_2||_\infty \right| < \left| \b-||w_2||_\infty \right| $ and $\left| \whattwo(t_2,r)-\m \right| < \left| \a-\m \right|$.
\end{enumerate}

In both cases, we have $\a \leq \whattwo(x,r) \leq \b$ on a closed interval of radius $r + c^* + \phi_2(r,a)$ and $\whattwo(x,r) < a$ elsewhere.
Taking composition with $g$ yields
$$
g(\whattwo(x-c^*,r)) = \begin{cases}
0, & x < -2r - c^* -\phi_2(r,\a), \\
1, & -2r - c^* -\phi_2(r,\a) \leq x \leq c^* + \phi_2(r,\a) ,\\
0, & x > c^* + \phi_2(r,\a).
\end{cases}
$$
Hence,
$$ \begin{aligned}
Q^2 [\what(\blank,r)] (x) &= w_1(x-c^* - \phi_2(r,a)) - w_1(x+2r+c^*+\phi_2(r,a)) .
\end{aligned} $$
Shifting the expression left by $c^*+\phi_2(r,a)$ units yields
\begin{equation} \begin{aligned} \label{appliedqtwice}
Q^2 [\what(\blank,r)] (x+c^*+\phi_2(r,a)) 
&= \what(x, r+c^*+\phi_2(r,\a)).
\end{aligned} \end{equation}

We are now prepared to prove the theorem using an inductive argument.
Let $A$ be the level set defined in the statement of the theorem.
Without loss of generality, we write $A=[-r_0,r_0]$, for some $r_0>0$.
We have
$$
g(u_0(x)) = \begin{cases}
1, & -r_0 \leq x \leq r_0, \\
0, & \text{else}.
\end{cases}
$$
for all $x$. Applying the convolution operator yields
$$ \begin{aligned}
u_1(x) &= w_1(x-r_0) - w_1(x+r_0) \\
&= \what(x-r_0,r_0).
\end{aligned}
$$

Let $(r_n)_{n=0}^{\infty}$ be a sequence of real numbers satisfying
\begin{equation} \label{recurrence}
r_{n+1} = r_n + c^* + \phi_2(r_n,a).
\end{equation}
If $r_0$ is sufficiently large, we can show that $r_n$ is strictly increasing and unbounded.
This is due to our assumption that $c^*>0$.
Since $\phi_2(r,a)\to c^*$, it suffices to assume $|\phi_2(r,a)-c^*|<c^*$, i.e. $r_0$ is sufficiently large so that $\phi_2(r,a)>0$ for all $r>r_0$.
Plugging this into the recurrence \eqref{recurrence} yields $r_{n+1}-r_n = c^* + \phi_2(r,a) > c^* > 0$.

This argument shows that, assuming $r_0$ is sufficiently large, equation \eqref{appliedqtwice} may be applied recursively.
From the principle of induction, it follows that $u_{2n+1}(x) = \what(x-r_n,r_n)$ for all $n \geq 0$.
Knowing that $(r_n)$ is unbounded allows us to use equation \eqref{recurrence} to compute the asymptotic spreading speed as the following limit.
$$
\lim_{n\to\infty} (r_{n+1}-r_n) = \lim_{n\to\infty} (c^* + \phi_2(r_n,a)) = c^* + \lim_{r\to\infty} \phi_2(r,a) = 2c^*.
$$
This completes the proof.
\end{proof}

\section{Examples}
In this section, we construct the periodic traveling wave solution for uniform and Laplace kernels and derive formulas for the mean spreading speed $c^*$ in terms of the model parameters.

\subsection{Laplace kernel} 
We applied our main results to the Laplace dispersal kernel, defined
\begin{equation} \label{laplacekernel}
k(x) = \frac{1}{2} e^{-|x|}.
\end{equation}

$k(x)$ is symmetric and has connected support, satisfying \hypref{h1}, \hypref{h2}, and \hypref{h3}.
We will now prove that $k(x)$ satisfies \hypref{h5}.
Let $A,B\in\mathbb R$ with $A>B$, let $\lambda \in (0,1)$, and define $f(x) = k(x-A) - \lambda k(x-B)$.
For $r>0$ sufficiently large, we have
$$
f(x-r) - f(-x-r) = k(x-r-A) - \lambda k(x-r-B) - k(x+r+A) + \lambda k(x+r+B)
$$
$$
= \begin{cases}
\frac{1}{2} e^{x}\left(-e^{r+A}+me^{r+B}-me^{-r-B}+e^{-r-A}\right) , & x < -r-A, \\
\frac{1}{2}e^{-x-r-A}+\frac{1}{2}e^{x}\left(e^{-r-A}-me^{-r-B}+me^{r+B}\right) , & -r-A \leq x \leq -r-B, \\
\frac{1}{2}e^{-x}\left(-e^{-r-A}+me^{-r-B}\right)+\frac{1}{2}e^{x}\left(-me^{-r-B}+e^{-r-A}\right) , & -r-B < x < r+B, \\
\frac{1}{2}e^{-x}\left(-e^{-r-A}+me^{-r-B}-me^{r+B}\right)+\frac{1}{2}e^{x-r-A}, & r+B \leq x \leq r+A \\
e^{-x}\left(-e^{-r-A}+me^{-r-B}-me^{r+B}+e^{r+A}\right), & x > r+A.
\end{cases}
$$

$f$ is monotone on $(-\infty,-r-A)\cup(r+A,\infty)$ with $f(\pm\infty)=0$; thus it cannot have any sign changes there.
On each of the intervals $[-r-A,-r-B]$, $(-r-B,r+B)$, and $[r+B,r+A]$, $f$ has the form $Ce^x+De^{-x}$.
Depending on the sign of $C$ and $D$, $f$ can have either one or zero sign changes on each interval.
Summing over each interval, it follows that $f$ has at most $3$ sign changes.
This proves \hypref{h5}; the proof of \hypref{h4} follows a similar argument.

Assuming $||w_2||_\infty<b$, the periodic traveling wave profiles are given by
\begin{equation} \label{w1laplace}
w_1(x) =  \begin{cases} 
1 - \frac{1}{2} e^{x}, & \text{if } x \leq 0, \\
\frac{1}{2} e^{-x}, & \text{if } x > 0,
\end{cases}
\end{equation}
and
\begin{equation} \label{w2laplace}
w_2(x-c^*) = \begin{cases}
\m + \left( \frac{1}{2}(\mtwo-\m)e^{-\Phi_1(\b)} - \frac{1}{2} e^{-\Phi_1(\a)} \right) e^x, & x < \Phi_1(\b), \\
\mtwo - \frac{1}{2} e^{-\Phi_1(\a)} e^x - \frac{1}{2} (\mtwo-\m) e^{\Phi_1(\b)}  e^{-x}, & \Phi_1(\b) \leq x \leq \Phi_1(\a), \\
\left( \frac{1}{2}e^{\Phi_1(\a)} - \frac{1}{2}(\mtwo-\m) e^{\Phi_1(\b)} \right) e^{-x}, & \Phi_1(\a) < x, 
\end{cases} \end{equation}
with
\begin{equation} \label{laplacecstar}
c^* = \frac{1}{2} \log \begin{cases}
\frac{1}{2\a} \left( e^{\Phi_1(\a)} -(\mtwo-\m)e^{\Phi_1(\b)}\right), & \a \leq \ell_a, \\
 e^{\Phi_1(\a)} \left( \mtwo - \a + \sqrt{(\mtwo-\a)^2-(\mtwo-\m)e^{\Phi_1(\b)-\Phi_1(\a)}}\right), & \ell_a < \a < \ell_b, \\
2\,(\m-\a)\left[e^{-\Phi_1(\a)}-(\mtwo-\m)e^{-\Phi_1(\b)}\right]^{-1}, & \a \geq \ell_b.
\end{cases}
\end{equation}
$\ell_a$ and $\ell_b$ are constants given by
\begin{equation}
\ell_a = \frac{1}{2}\left(1 - (\mtwo-\m) e^{\Phi_1(\b)-\Phi_1(\a)} \right), \quad 
\ell_b = \frac{1}{2} \left( 1+\m- e^{\Phi_1(\b)-\Phi_1(\a)} \right).
\end{equation}
The formulas for $\Phi_1$ and $\Phi_2$ are given by
\begin{equation}
\Phi_1(\ell) = \begin{cases} -\log(2\ell), &\ell\leq \frac{1}{2}, \\ \log(2-2\ell), & \ell > \frac{1}{2}, \end{cases}
\end{equation}
and
\begin{equation}
\Phi_2(\ell) = -c^* + \log \begin{cases}
\frac{1}{2\ell}\left(e^{\Phi_1(\a)} -(\mtwo-\m)e^{\Phi_1(\b)} \right), & \ell \leq w_2(\Phi_1(\a)), \\
e^{\Phi_1(\a)} \left( \mtwo - \ell+ \sqrt{(\mtwo-\ell)^2-(\mtwo-\m)e^{\Phi_1(\b)-\Phi_1(\a)}}\right), & w_2(\Phi_1(\a)) < \ell < w_2(\Phi_1(\b)), \\
2\,(\m-\ell) \left[e^{-\Phi_1(\a)}-(\mtwo-\m)e^{-\Phi_1(\b)}\right]^{-1}, & \ell\geq w_2(\Phi_1(\b)).
\end{cases}
\end{equation}

The critical spreading speed $c^*$ for the Laplace kernel thus falls into one of three cases:
\begin{enumerate}[{Case} 1.]

\item $\a\leq\frac{1}{2}$, $\b\leq\frac{1}{2}$:
$$ c^* = \frac{1}{2}\log \begin{cases}
\frac{\b-\a(1-\m)}{4\a^2\b}, & \a < \frac{\b-\a(1-\m)}{2\b}, \\
\frac{1-\a + \sqrt{\b(1-\a)^2 - \a(1-\m)}}{2\a\sqrt\b}, & \frac{\b-\a(1-\m)}{2\b} \leq \a \leq \frac{\b(1+\m)-\a}{2\b}, \\
\frac{\a-\m}{\b(1-\m)-\a}, & \a > \frac{\b(1+\m)-\a}{2\b}.
\end{cases} $$

\item $\a\leq\frac{1}{2}$, $\b>\frac{1}{2}$:
$$ c^* = \frac12 \log \begin{cases}
\frac{1-4\a(1-\m)(1-\b)}{4\a^2}, & a<\frac{1-4\a(1-\m)(1-\b)}{2}, \\
\frac{1-\a + \sqrt{(1-\a)^2 - 4\a(1-\m)(1-\b)}}{2\a}, & \frac{1-4\a(1-\m)(1-\b)}{2}\leq\a\leq\frac{1+\m-4\a(1-\b)}{2}, \\
\frac{4(\a-\m)(1-\b)}{1-\m-4\a(1-\b)}, & \a>\frac{1+\m-4\a(1-\b)}{2}.
\end{cases} $$

\item $\a>\frac{1}{2}$, $\b>\frac{1}{2}$:
$$ c^* = \frac12 \log \begin{cases}
\frac{1-\a-(1-\m)(1-\b)}{\a}, & \a<\frac{1-\a-(1-\m)(1-\b)}{2(1-\a)}, \\
2\,(1-\a)\left(1-\a + \sqrt{\frac{(1-\a)^3-(1-\m)(1-\b) }{1-\a}}\right), & \frac{1-\a-(1-\m)(1-\b)}{2(1-\a)}\leq\a\leq\frac{\b-1+\m(1-\a)}{2(1-\a)}, \\
\frac{4(\m-\a)(1-\a)(1-\b)}{1-\b+\m(1-\a)}, & \a>\frac{\b-1+\m(1-\a)}{2(1-\a)}.
\end{cases} $$
\end{enumerate}

Figure \ref{fig:heatmap} shows the results of a numerical simulation with growth parameters were $a=0.3$, $m=0.5$, and $b=0.8$ with the Laplace dispersal kernel, for an average spreading speed of $c^* = \frac{1}{2} \ln \frac{22}{9}$.
Figures \ref{fig:subim1} and \ref{fig:subim2} show heatmaps of the periodic traveling wave solution and a spreading solution with compactly supported initial condition.
Figure \ref{fig:laplacewave} shows a numerical simulation of the periodic traveling wave solution, while Figure \ref{fig:laplacecompact2} shows the asymptotic convergence of the intermediate wave-speed sequence $c_n(\ell)$, evaluated at the Allee threshold $\ell = \a$.

\begin{figure}[h]
\begin{subfigure}{0.5\textwidth}
\includegraphics[width=\linewidth]{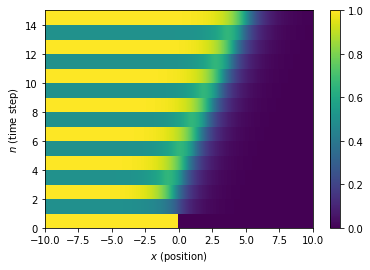}
\caption{Heatmap of the periodic traveling wave.}
\label{fig:subim1}
\end{subfigure}
\begin{subfigure}{0.5\textwidth}
\includegraphics[width=\linewidth]{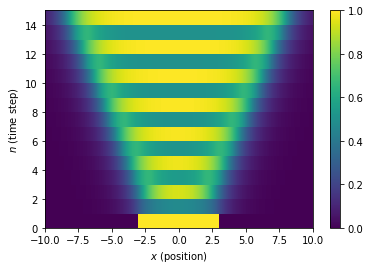}
\caption{Heatmap of a periodic spreading solution.}
\label{fig:subim2}
\end{subfigure}
\begin{subfigure}{0.5\textwidth}
 \includegraphics[width=\linewidth]{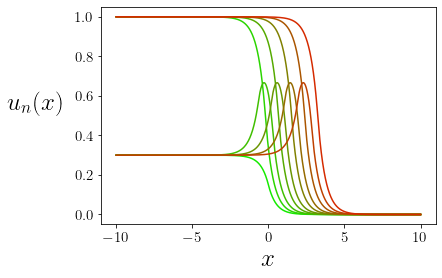}
\caption{Population density curve of the periodic traveling wave.}
\label{fig:laplacewave}
\end{subfigure}
\begin{subfigure}{0.5\textwidth}
 \includegraphics[width=\linewidth]{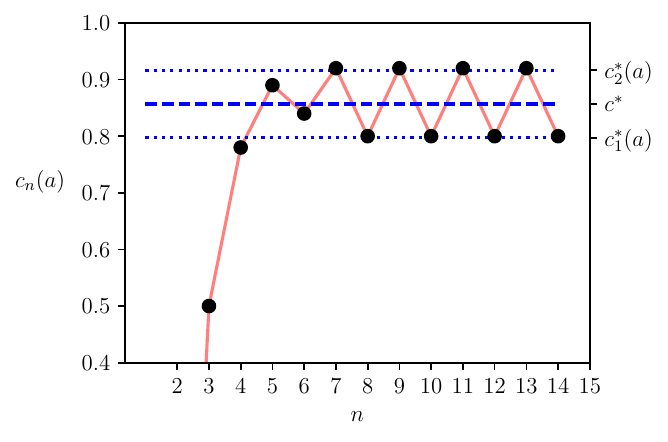}
 \caption{Time series plot of intermediate spreading speed.}
 \label{fig:laplacecompact2}
\end{subfigure}
\caption{Heatmaps of (a) the periodic traveling wave, and (b) the corresponding spreading solution with compactly supported initial data, and Laplace dispersal kernel.
In figure (d), the horizontal blue lines correspond to the theoretical values derived in Corollary \ref{cor1}.}
\label{fig:heatmap}
\end{figure}

\subsection{Uniform kernel}
The uniform dispersal kernel is given by
\begin{equation}
k(x) = \begin{cases}
\frac{1}{2}, & \text{if } |x|\leq 1, \\
0, & \text{if } |x| > 1.
\end{cases} \end{equation}
Like the previous example, $k(x)$ is symmetric around zero.
We can verify \hypref{h4} by letting $A>B$ and $\lambda \in (0,1)$ and defining $f(x)=k(x-A)-\lambda k(x-B)$.
There are two cases:
\begin{enumerate}[{Case} 1.]
\item If $|A-B|<2$, then
$$
f(x) = \begin{cases}
0, & x \leq B-1, \\
-\frac{\lambda}{2}, & B-1 < x \leq A-1, \\
\frac{1-\lambda}{2}, & A-1 < x < B+1, \\
\frac{1}{2}, & B+1 \leq x < A+1 \\
0, & x \geq A+1.
\end{cases}
$$

\item If $|A-B| \geq 2$, then
$$
f(x) = \begin{cases}
0, & x \leq B-1, \\
-\frac{\lambda}{2}, & B-1 < x < B+1, \\
0, & B+1 \leq x \leq A-1, \\
\frac{1}{2}, & A-1 < x < A+1 \\
0, & x \geq A+1.
\end{cases}
$$
\end{enumerate}
In both cases, the number of sign changes is exactly 1.
To extend this argument to \hypref{h5}, observe that since $k(x)$ has compact support, we have
$$
\begin{aligned}
f(x-r) - f(r-x) &= k(x-r-A) - \lambda k(x-r-B) - k(x+r+A) + \lambda k(x+r+B) \\
&= \begin{cases}
-f(r-x), & x < -r+1, \\
0, & -r+1 \leq x \leq r-1, \\
f(x-r), & x > r-1,
\end{cases}
\end{aligned}
$$
for sufficiently large $r$.

The alternating wave profiles are given by
\begin{equation} \label{uniformw1}
\begin{aligned}
w_1(x) 
= \begin{cases}
1, & x \in (-\infty, -1), \\
\frac{1}{2}-\frac{1}{2}x, & x \in [-1, 1] ,\\
0, & x \in (1, \infty),
\end{cases}
\end{aligned} \end{equation}
and 
\begin{equation} \label{uniformw2}
\begin{aligned}
w_2(x+c^*)
= \begin{cases}
m,
& x \in (-\infty, -2\b), \\
\frac{1-m}{2}x + m + b - mb,
& x \in [-2\b, 
 -2\a), \\
-\frac{m}{2}x +m+b- mb-a,
& x \in [-2\a, 2-2\b), \\
-\frac{1}{2} x-a+1,
& x \in [2-2\b, 2-2\a], \\
0,
& x \in (2+2\a,\infty),
\end{cases}
\end{aligned} \end{equation}
with
\begin{equation} \label{clinear}
c^* = \begin{cases}
1 - 2a, & \text{if } a \leq b/2, \\
1 -b + \frac{b - 2a}{m}, & \text{if }a > b/2.
\end{cases}
\end{equation}

$w_2$ has a global maximum at $x=-2\a$ so that $||w_2||_\infty=m+(b-a)(1-m)$.
Thus, a sufficient condition for the existence of the periodic traveling wave is $w_2(-2\a)=m+(b-a)(1-m) < \b$.

The intermediate wave speeds are given by
\begin{equation} \begin{aligned}
c_1^*(\ell) 
&= 1-2\ell
\end{aligned}
\end{equation}
and
\begin{equation} \begin{aligned}
c_2^*(\ell) 
&= \begin{cases}
1 - 4a + 2\ell, & a \leq b/2, \\
1 - 2b + \frac{2b-4a}{m} + 2\ell, & a > b/2 ,
\end{cases}
\end{aligned} \end{equation}
for $0 < \ell < m + (b-a)(1-m)$.
By evaluating the intermediate wave speeds at $\ell=\a$, we obtain $c_1^*(\a)=c_2^*(\a)$ if $\a \leq \b/2$, and $|c_1^*(\a)-c_2^*(\a)|=2(2\a-\b)(\frac{1-\m}{\m})>0$ if $a>b/2$.
So for $a>b/2$, the traveling wave is periodic with two different intermediate wave speeds.
Furthermore, the difference between these two intermediate speeds is increasing with respect to $a$, the Allee threshold, and decreasing with respect to $\b$, the overcompensation threshold.

Figure \ref{fig:subim3} depicts the periodic traveling wave solution \eqref{ptw} with the uniform kernel with growth parameters are $\a=0.325$, $\b=0.6$, and $\m=0.45$.
The mean spreading speed can be analytically calculated as $c^*=\frac{13}{45}$.
The intermediate speeds are $c_1^*(\a)=\frac{7}{20}$ and $c_2^*(\a)=\frac{41}{180}$.
Figure \ref{fig:subim4} demonstrates the spreading phenomena with initial domain size $r_0=1$.
Since our uniform kernel is compact with support $[-1,1]$, a sufficient lower bound on the initial domain size is $r_0 \geq 3$.

\begin{figure}[h]
\begin{subfigure}{0.5\textwidth}
\includegraphics[width=\linewidth]{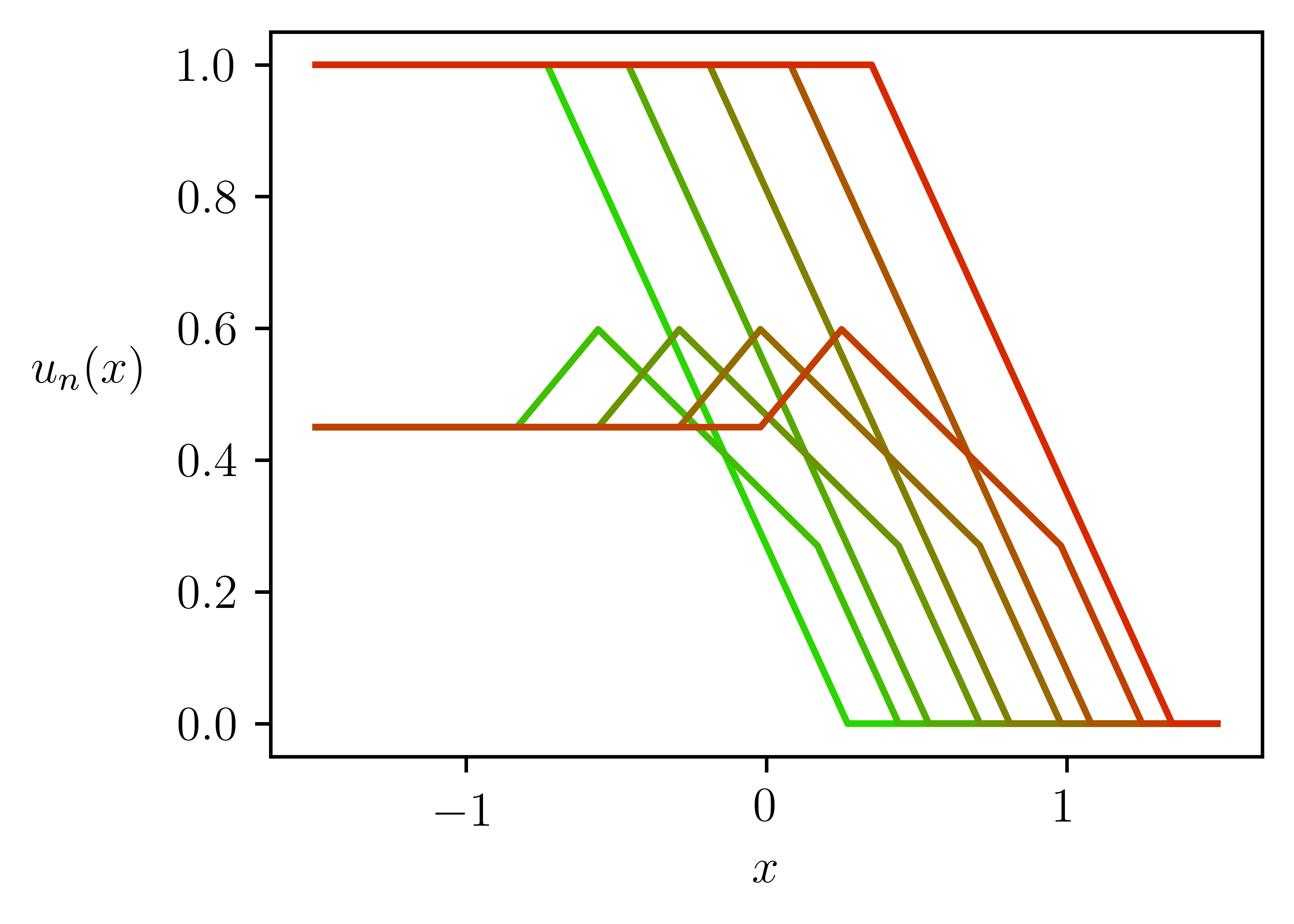}
\caption{Heaviside initial data.}
\label{fig:subim3}
\end{subfigure}
\begin{subfigure}{0.5\textwidth}
\includegraphics[width=\linewidth]{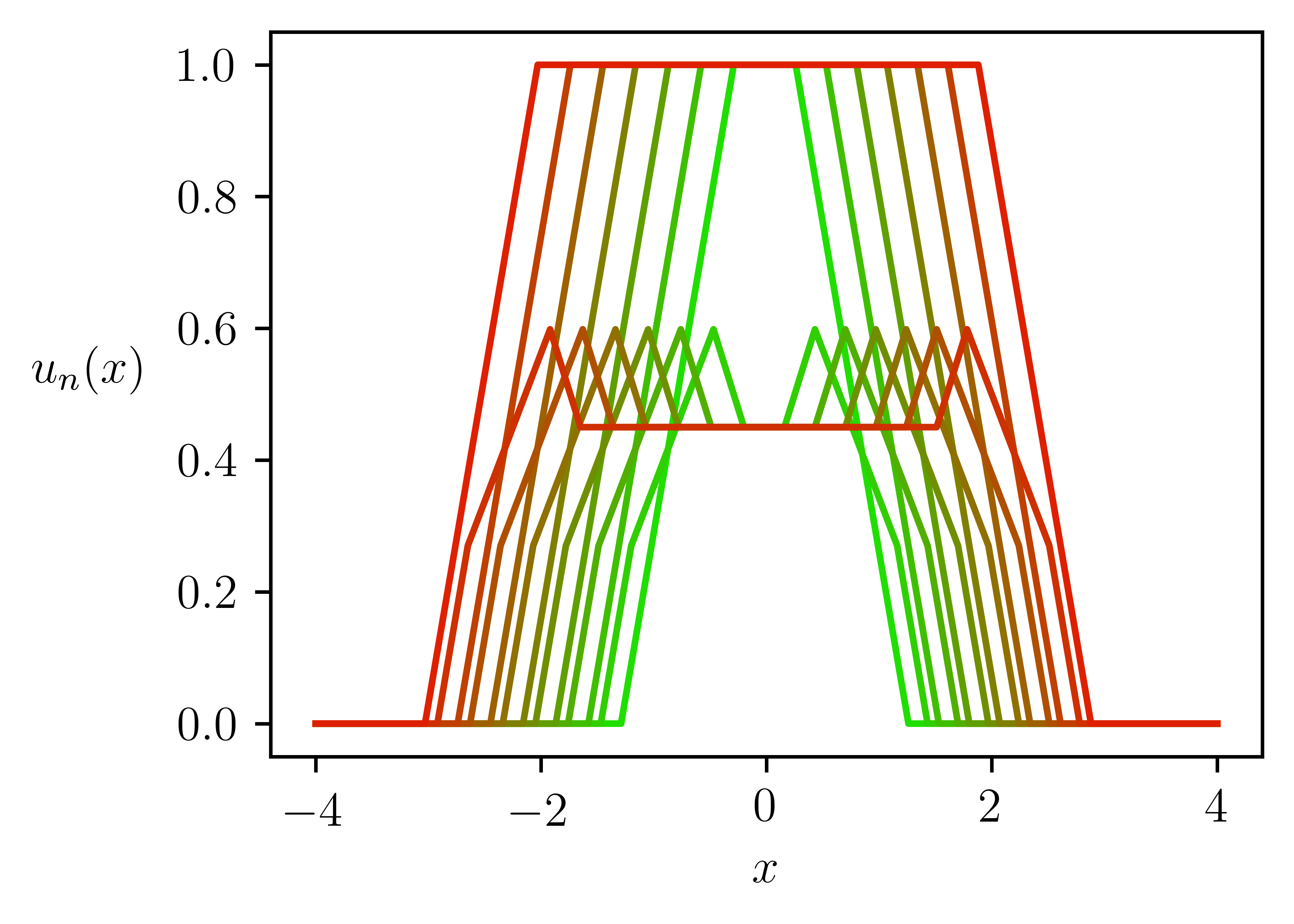}
\caption{Compact initial data ($r_0=1$).}
\label{fig:subim4}
\end{subfigure}
\caption{Population density plots of (a) the periodic traveling wave, and (b) a spreading solution with uniform dispersal.}
\label{fig:uniformplot}
\end{figure}

\end{document}